\documentclass[conference]{IEEEtran}
\IEEEoverridecommandlockouts
\usepackage{cite}
\usepackage{geometry}
\usepackage{amsmath,amssymb,amsfonts}
\usepackage{hyperref}
\usepackage{subcaption}
\usepackage[lined,boxed,ruled,linesnumbered]{algorithm2e}
\usepackage{graphicx}
\usepackage{textcomp}
\usepackage{xcolor}
\usepackage{bm}
\usepackage{eucal}
\newtheorem{theorem}{Theorem}[section]
\newcommand{\C}{\mathbb{C}}
\newcommand{\R}{\mathbb{R}}
\newcommand{\diag}{\mathrm{diag}}
\newcommand{\argmax}{\mathop{\mathrm{arg\,max}}}

\usepackage{tikz}
\usetikzlibrary{arrows.meta,positioning,calc,angles,quotes}

\begin{document}
\title{\huge{Reconfigurable Intelligent Surfaces-assisted Positioning in Integrated Sensing and Communication Systems}}
\author{%
  \IEEEauthorblockN{%
    Huyen-Trang Ta\IEEEauthorrefmark{1},
    Ngoc-Son Duong\IEEEauthorrefmark{2},
    Trung-Hieu Nguyen\IEEEauthorrefmark{2},
    Van-Linh Nguyen\IEEEauthorrefmark{3}, and
    Thai-Mai Dinh\IEEEauthorrefmark{1}
  }
  \IEEEauthorblockA{\IEEEauthorrefmark{1}Faculty of Electronics and Telecommunications, VNU University of Engineering and Technology, Vietnam}
  \IEEEauthorblockA{\IEEEauthorrefmark{2}Faculty of Electronic Engineering, Posts and Telecommunications Institute of Technology, Hanoi, Vietnam}
  \IEEEauthorblockA{\IEEEauthorrefmark{3}Department of Computer Science and Information Engineering, National Chung Cheng University, Taiwan}
E-mails: $\lbrace${\ttfamily 22029064, dttmai}$\rbrace$@vnu.edu.vn, $\lbrace${\ttfamily sondn, hieunt}$\rbrace$@ptit.edu.vn, $\lbrace${\ttfamily nvlinh}$\rbrace$@cs.ccu.edu.tw
}

\maketitle
\begin{abstract}
This paper investigates the problem of high-precision target localization in integrated sensing and communication (ISAC) systems, where the target is sensed via both a direct path and a reconfigurable intelligent surface (RIS)-assisted reflection path. We first develop a sequential matched-filter estimator to acquire coarse angular parameters, followed by a range recovery process based on subcarrier phase differences. Subsequently, we formulate the target localization problem as a non-linear least squares optimization, using the coarse estimates to initialize the target's position coordinates. To solve this efficiently, we introduce a fast iterative refinement algorithm tailored for RIS-aided ISAC environments. Recognizing that the signal model involves both linear path gains and non-linear geometric dependencies, we exploit the separable least-squares structure to decouple these parameters. Furthermore, we propose a modified Levenberg algorithm with an approximation strategy, which enables low-cost parameter updates without necessitating repeated evaluations of the full non-linear model. Simulation results show that the proposed refinement method achieves accuracy comparable to conventional approaches, while significantly reducing algorithmic complexity.
\end{abstract}

\begin{IEEEkeywords}
Integrated sensing and communication, 6G, Reconfigurable Intelligent Surfaces, High-precision Localization.
\end{IEEEkeywords}

\section{Introduction}
Integrated Sensing and Communication (ISAC) has emerged as a foundational element of 6G wireless networks, facilitating the seamless integration of sensing and communication functionalities to support a wide range of advanced applications, with intelligent transportation systems being a primary focus \cite{b1}. This study is dedicated to enhancing target localization accuracy through ISAC, strategically employing Reconfigurable Intelligent Surfaces (RIS) to overcome the inherent limitations and inefficiencies of conventional methods. Recently, the authors in \cite{b2} introduced a novel RIS-based passive sensing technique for angle of arrival (AOA) estimation. To address the hardware constraints of using a single-channel receiver, the authors proposed a joint framework involving atomic norm minimization (ANM) for interference mitigation and a Hankel-based multiple signal classification (MUSIC) algorithm for high-resolution AOA estimation. Their approach effectively handles communication-induced interference, which is further suppressed by optimizing the RIS measurement matrix to enhance the signal-to-interference-plus-noise ratio (SINR). Validated through theoretical Cramér-Rao lower bound (CRLB) analysis and simulations, this method demonstrates superior accuracy in vehicle localization within complex intelligent transportation scenarios. 

While passive RIS-aided ISAC systems have been widely explored, they are often limited by fading. To address this, Fang \textit{et al.} \cite{b6} investigated an active RIS-enabled ISAC framework specifically tailored for sensing extended targets in non-line-of-sight (NLoS) regions. Unlike conventional studies focusing on point-target models, the authors derived a closed-form CRLB for the target response matrix estimation. By formulating a joint optimization problem for transmit and reflective beamforming, they minimized the sensing CRB while satisfying the communication SINR requirements. The proposed alternating optimization (AO) approach, leveraging semi-definite relaxation (SDR) and successive convex approximation (SCA), demonstrates that active RIS provides significant performance gains over passive counterparts by utilizing power amplification to overcome substantial path loss. 

Recognizing that conventional MIMO systems can be cost-prohibitive, the authors in \cite{b4, b5} proposed utilizing a single-element receive antenna aided by RIS for target localization. In \cite{b4}, by manipulating the reflection phase shifts over consecutive snapshots, the RIS facilitates the generation of linearly independent signal components required for subspace-based estimation algorithms. Specifically, the authors developed a multi-object localization framework employing three RISs and introduced an interference mitigation scheme to suppress signals that do not interact with the RIS. Numerical evaluations demonstrate that the proposed method achieves high localization accuracy, with the mean square error (MSE) asymptotically approaching the CRLB as the signal-to-noise ratio (SNR) increases. Ref. \cite{b5} introduced a framework that combines an RIS with robust spatial covariance matrix (SCM) reconstruction schemes. The authors introduce a two-stage RIS phase-shifting strategy to efficiently capture spatial signal characteristics. Furthermore, the authors addressed practical hardware impairments by employing the Tikhonov regularization and total least-squares criteria for SCM reconstruction, ensuring high estimation accuracy even in the presence of phase shift errors. Compared to traditional methods, this approach significantly reduces system complexity while providing a robust solution for NLOS localization.

Unlike prior work, this work presents a novel target localization model within an RIS-assisted ISAC system that explicitly exploits dual echo paths: a direct reflection path and an RIS-assisted reflection path. The fundamental motivation for this dual-path consideration is that each echo provides complementary spatial information, which can be harnessed to significantly enhance localization precision beyond what is achievable with single-path models. To transform this multi-path information into an accurate position estimate, we adopt a hierarchical approach. The first stage begins with coarse parameter acquisition, where a sequential matched-filter and subcarrier phase analysis are employed to extract initial angular and range data. To transition from these initial estimates to a high-precision solution, we cast the localization task as a non-linear least squares (NLS) optimization. Our refinement strategy leverages the separable structure of the system model, allowing for the decoupling of linear channel gains from non-linear geometric variables. Efficiency is further ensured by a modified Levenberg update rule \cite{b9}, which utilizes an approximation strategy to bypass the exhaustive computations typically required for non-linear model updates.

\begin{figure}[t]
\centering
\begin{tikzpicture}[
    scale=0.65,
    >=Latex,
    nodeStyle/.style={circle, draw, fill=white, inner sep=1.2pt},
    lab/.style={font=\small},
    pathDir/.style={thick, dashed},
    pathRIS/.style={thick},
    measArrow/.style={-{Latex[length=2mm]}, thick}
]
\coordinate (B) at (0,0);
\coordinate (R) at (5,5);
\coordinate (T) at (10,2);
\draw[->, gray!60] (-0.5,0) -- (12,0) node[lab, below right] {$x$};
\draw[->, gray!60] (0,-0.5) -- (0,6) node[lab, above left] {$y$};
\node[nodeStyle, label={[lab]below:BS $(B)$}] (NB) at (B) {};
\node[nodeStyle, label={[lab]above:RIS $(R)$}] (NR) at (R) {};
\node[nodeStyle, label={[lab]below:Target $(T)$}] (NT) at (T) {};
\draw[pathDir] (B) -- (T);
\draw[pathDir] (B) -- (R);
\draw[pathDir] (R) -- (T);
\draw[measArrow, blue!70] ($(B)!0.08!(T)$) -- ($(T)!0.08!(B)$);
\draw[measArrow, blue!70] ($(T)!0.08!(B)$) -- ($(B)!0.08!(T)$);
\node[lab, blue!70, rotate=11.2] at ($(B)!0.55!(T) + (0.8,-0.1)$) {\tiny Direct echo $B\!\to\!T\!\to\!B$};
\draw[measArrow, red!70] ($(B)!0.08!(R)$) -- ($(R)!0.08!(B)$);
\draw[measArrow, red!70] ($(R)!0.08!(T)$) -- ($(T)!0.08!(R)$);
\draw[measArrow, red!70] ($(T)!0.08!(R)$) -- ($(R)!0.08!(T)$);
\draw[measArrow, red!70] ($(R)!0.08!(B)$) -- ($(B)!0.08!(R)$);
\node[lab, red!70, rotate=-31.5] at ($(R)!0.8!(T) + (0,0.4)$)
{\tiny RIS echo $B\!\to\!R\!\to\!T\!\to\!R\!\to\!B$};
\node[lab] at ($(B)!0.5!(T) + (0.0,0.35)$) {$d_{BT}$};
\node[lab] at ($(B)!0.5!(R) + (0,0.6)$) {$d_{BR}$};
\node[lab] at ($(R)!0.5!(T) + (0.0,-0.35)$) {$d_{RT}$};
\coordinate (Bx) at ($(B)+(2,0)$);
\pic[draw, ->, angle radius=18mm, angle eccentricity=1.2,
     "$\theta_{BT}$"] {angle = Bx--B--T};
\pic[draw, ->, angle radius=12mm, angle eccentricity=1.3,
     "$\theta_{BR}$"] {angle = Bx--B--R};
\coordinate (Rx) at ($(R)+(2,0)$);
\pic[draw, ->, angle radius=8mm, angle eccentricity=1.5, "$\theta_{RT}$"] {angle = T--R--Rx};
\draw[thick, gray!100] ($(R)+(-1.5, 0)$) -- ($(R)+(1.5, 0)$);
\end{tikzpicture}
\caption{Geometric scenario of the RIS-aided ISAC sensing with a direct path and an RIS-assisted path.}
\label{fig:1}
\end{figure}
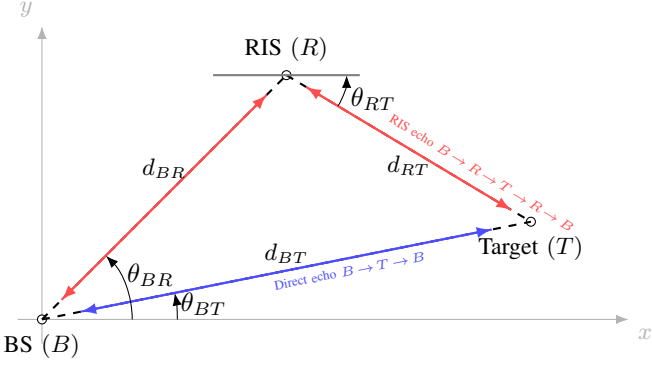

\section{System Model and problem formulation}
Consider a two-dimensional (2D) geometry with a ISAC base station (BS), an RIS, and a target located at $\mathbf{p}_B,\mathbf{p}_R,\mathbf{p}_T\in\R^2$, as shown in Fig. \ref{fig:1}. The BS employs a transmit ULA of $N_t$ elements and a receive ULA of $N_r$ elements with inter-element spacing $d=\lambda/2$, where $\lambda$ denotes the wavelength. The RIS employs a ULA of $M$ elements. Two echo sensing paths are considered: \textit{i)} Direct echo - the signal from BS touch the target and go back to BS and \textit{ii)} RIS echo - the signal from BS is transmitted to RIS then target then RIS and finally back to BS. This sensing configuration is inspired by the following theorem.
\begin{theorem}\label{th1}
If the paths are completely independent of each other, we have
\begin{equation}
    \lim_{R \rightarrow \infty} \mathrm{PEB}= 0,
\end{equation}
where $R$ and $\mathrm{PEB}$ are number of path and position error bound  (PEB), respectively.
\end{theorem}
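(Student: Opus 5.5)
The plan is to work entirely with the Fisher information matrix (FIM) of the target position $\mathbf{p}_T$. We take the PEB to be $\mathrm{PEB}=\sqrt{\mathrm{tr}\{\mathbf{J}^{-1}(\mathbf{p}_T)\}}$, where $\mathbf{J}(\mathbf{p}_T)\in\R^{2\times 2}$ is the equivalent FIM obtained after eliminating nuisance parameters such as the complex path gains. We read ``completely independent'' as follows: the observations of different paths are statistically independent, and each path $r$ carries its own disjoint set of nuisance parameters. Under that reading the log-likelihood is a sum over paths, so the full FIM is block-structured. Taking the Schur complement with respect to the nuisance blocks then gives additivity of the equivalent FIMs:
\begin{equation}
\mathbf{J}_R(\mathbf{p}_T)=\sum_{r=1}^{R}\mathbf{J}_r(\mathbf{p}_T),
\end{equation}
where each $\mathbf{J}_r\succeq \mathbf{0}$ is the equivalent FIM of path $r$ alone. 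Each $\mathbf{J}_r$ can be written as $\mathbf{T}_r^{T}\mathbf{J}_{\eta_r}\mathbf{T}_r$. Here $\mathbf{J}_{\eta_r}$ is the FIM of the channel parameters of path $r$ (its delay or range, and its AoA), and $\mathbf{T}_r$ is the Jacobian mapping $\mathbf{p}_T$ to those parameters. For the direct and RIS echoes this Jacobian is built from the unit vectors $\mathbf{u}_r$ along the range direction and the orthogonal vectors $\mathbf{u}_r^{\perp}$ along the angle direction.

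The second step is a uniform lower bound on the per-path information. We will assume the per-path SNR and geometry stay bounded away from degeneracy as $R$ grows: each path has range information $\kappa_r\ge\kappa_0>0$ and angular information $\gamma_r\ge\gamma_0>0$ (suitably scaled by distance). This gives
\begin{equation}
\mathbf{J}_r\succeq \kappa_0\,\mathbf{u}_r\mathbf{u}_r^{T}+\gamma_0\,\mathbf{u}_r^{\perp}(\mathbf{u}_r^{\perp})^{T}\succeq c\,\mathbf{I}_2,\quad c=\min(\kappa_0,\gamma_0)>0,
\end{equation}
because $\mathbf{u}_r\mathbf{u}_r^T+\mathbf{u}_r^\perp(\mathbf{u}_r^\perp)^T=\mathbf{I}_2$. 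It follows that $\mathbf{J}_R\succeq Rc\,\mathbf{I}_2$, hence $\mathbf{J}_R^{-1}\preceq (Rc)^{-1}\mathbf{I}_2$. This yields the bound $\mathrm{PEB}^2\le 2/(Rc)\to 0$ as $R\to\infty$.

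If only range (or only angle) information is available per path, each $\mathbf{J}_r$ has rank one. In that case we instead need the directions $\mathbf{u}_r$ to be diverse, i.e. $\liminf_{R}\lambda_{\min}\big(\frac1R\sum_r\mathbf{u}_r\mathbf{u}_r^T\big)>0$. The argument is then identical, with $c$ replaced by this eigenvalue lower bound.

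The main obstacle is making the hypothesis ``completely independent'' precise enough that the Schur-complement additivity holds and no information is lost to shared nuisance parameters. For example, a common clock offset or a common RIS phase error would couple the paths. Such a coupling would add a negative-semidefinite correction that might not vanish, so the PEB would saturate at a positive floor instead of tending to zero. I will handle this by stating explicitly, as part of the interpretation of the theorem, that nuisance parameters are path-specific. I will also state explicitly the non-degeneracy condition on $c$ (bounded-below per-path SNR and non-collinear geometry). The limit then follows from the monotonicity of matrix inversion on the positive-definite cone.
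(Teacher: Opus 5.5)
Your proposal uses the same skeleton as the paper's appendix: independence makes the log-likelihood additive, so the FIM is a sum of per-path terms, and each term is mapped to position by a Jacobian and is therefore positive semidefinite. The difference is at the decisive step, and there your version is the one that actually proves the claim.

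The appendix stops at $\mathbf{J}_{\tilde{\bm{\eta}}}=\sum_r\mathbf{J}_{\tilde{\bm{\eta}},r}\succeq 0$ and concludes that the PEB vanishes. Positive semidefinite increments only show that the PEB is non-increasing in $R$. Two examples show this is not enough: with $\mathbf{J}_{\tilde{\bm{\eta}},r}=r^{-2}\mathbf{I}_2$ the PEB converges to a positive constant, and with every $\mathbf{J}_{\tilde{\bm{\eta}},r}$ rank one along a common direction the summed FIM stays singular for all $R$. Your uniform bound $\mathbf{J}_r\succeq c\,\mathbf{I}_2$, or the directional-diversity condition in the rank-one case, is exactly what forces $\lambda_{\min}(\mathbf{J}_R)\to\infty$. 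It also gives an explicit rate $\mathrm{PEB}^2\le 2/(Rc)$ through operator monotonicity of the inverse.

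Your Schur-complement treatment with path-specific nuisance parameters makes precise a point the paper passes over. The paper maps $\bm{\eta}$, which also contains the path gains, to $[x,y]^\top$ through a single Jacobian. In your version, the arrowhead block structure of the full FIM is what yields additivity of the equivalent FIMs. The price is that these non-degeneracy hypotheses must be written into the theorem, which is appropriate because the statement is false without them.

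One detail to tighten. The bound $\mathbf{J}_r\succeq\kappa_0\mathbf{u}_r\mathbf{u}_r^\top+\gamma_0\mathbf{u}_r^\perp(\mathbf{u}_r^\perp)^\top$ does not follow from lower bounds on the diagonal entries of the per-path delay--angle FIM if that matrix has cross terms. Instead, assume $\lambda_{\min}(\mathbf{J}_{\eta_r})\ge\lambda_0>0$ and use $\mathbf{T}_r^\top\mathbf{J}_{\eta_r}\mathbf{T}_r\succeq\lambda_0\,\mathbf{T}_r^\top\mathbf{T}_r$. For the delay/angle parametrization, $\lambda_{\min}(\mathbf{T}_r^\top\mathbf{T}_r)$ is bounded below as long as the relevant distances stay bounded. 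Also rename your constant, since $c$ already denotes the propagation speed in the paper.
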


\textit{Proof:} See Appendix

Let $d_{BT}$, $d_{BR}$, and $d_{RT}$ denote the one-way distances between the BS and target, BS and RIS, RIS and target, respectively. The two-way distances and delays are $d_{\mathrm{dir}} = 2 d_{BT}$ and $d_{\mathrm{ris}} = 2(d_{BR}+d_{RT})$, respectively. The time of arrival (TOA) would be $\tau_{\mathrm{dir}} = \frac{d_{\mathrm{dir}}}{c}$ and $\tau_{\mathrm{ris}} = \frac{d_{\mathrm{ris}}}{c}$, respectively, where $c$ is the propagation speed. The BS transmits sensing waveforms over $N$ subcarriers. We denote the subcarrier angular frequencies as $\omega_n = 2\pi (n-1)/(N T_s), n=1,\ldots,N$, where $T_s = 1/R_s$ and $R_s$ is the sampling rate. For $n$-th subcarrier, the BS transmits $N_s$ times indexed by $k = 1,\ldots,N_s$ with transmit beam vectors $\mathbf{f}_{n,k}\in\C^{N_t}$. In addition, we define the ULA steering vector for an $N_{\mathrm{ant}}$-element array as
\begin{equation}
\mathbf{a}(\theta) = \frac{1}{\sqrt{N_{\mathrm{ant}}}} \Big[1, e^{-j2\pi \frac{d}{\lambda}\sin\theta}, \ldots, e^{-j2\pi (N_{\mathrm{ant}}-1)\frac{d}{\lambda}\sin\theta}\Big]^\top.
\end{equation}
Specifically, we denote $\mathbf{a}_{t}(\theta)$ and $\mathbf{a}_{r}(\theta)$ as the BS transmit and receive steering vectors, and $\mathbf{a}_{\mathrm{RIS}}(\theta)$ as the RIS steering vector. The BS-RIS angles $\theta_{BR}$ and $\theta_{RB}$ are assumed known. The received signal at the BS for $n$-th subcarrier and $k$-th transmission time from the direct echo is modeled as
\begin{equation}
\mathbf{y}^{\mathrm{dir}}_{n,k} = \sqrt{N_tN_r}g_{\mathrm{dir}} e^{-j\omega_n \tau_{\mathrm{dir}}}\mathbf{a}_{r}(\theta_{TB})\mathbf{a}^H_t(\theta_{BT}) \mathbf{f}_{n,k},
\end{equation}
where $g_{\mathrm{dir}}$ is the complex gain and $\mathbf{f}$ is the beamforming vector. Let $\mathbf{\Phi}_{n,k} = \diag \big(e^{j\phi_{n,k,1}}, \ldots, e^{j\phi_{n,k,M}}\big)$ be the RIS's phase-shift matrix, the echo via RIS-assisted path is
\begin{equation}
\mathbf{y}^{\mathrm{ris}}_{n,k} = \sqrt{N_tN_r}g_{\mathrm{ris}} e^{-j\omega_n \tau_{\mathrm{ris}}}\mathbf{a}_{r}(\theta_{RB})\mathbf{a}^H_{t}(\theta_{BR}) \mathbf{f}_{n,k}b_{n,k}(\theta_{RT}),
\end{equation}
where $g_{\mathrm{ris}}$ is the RIS-assisted path gain and $b_{n,k}(\theta_{RT}) = \Big(\mathbf{a}_{\mathrm{RIS}}^H(\theta_{RB}) \mathbf{\Phi}_{n,k}\mathbf{a}_{\mathrm{RIS}}(\theta_{RT})\Big)^2$. Finally, the total measurement is
\begin{equation}
\mathbf{y}_{n,k} = \mathbf{y}^{\mathrm{dir}}_{n,k} + \mathbf{y}^{\mathrm{ris}}_{n,k} + \mathbf{z}_{n,k},
\end{equation}
with $\mathbf{z}_{n,k}$ being i.i.d. complex Gaussian noise. The ultimate goal is to localize the target at $\mathbf{p}_T$ by exploiting the direct and RIS-assisted echoes. Equivalently, we estimate the geometric parameters of the two sensing paths
\begin{equation}
\bm\eta \triangleq \{\theta_{BT}, \tau_{\rm d}, g_{\rm d}, \theta_{RT}, \tau_{\rm r}, g_{\rm r}\},
\end{equation}
where $\theta_{TB} = \theta_{BT}+\pi$ for the monostatic direct echo. The delays are linked to $\mathbf{p}_T$ via $\tau_{\rm{d}}=\frac{2\|\mathbf{p}_T-\mathbf{p}_B\|}{c}$ and $\tau_{\rm{r}}=\frac{2(\|\mathbf{p}_R-\mathbf{p}_B\|+\|\mathbf{p}_T-\mathbf{p}_R\|)}{c}$. The maximum-likelihood estimator is equivalent to the nonlinear least-squares (NLS) problem
\begin{equation}
\begin{aligned}
\hat{\bm{\eta}} = \arg\min_{\bm\eta} \sum_{n=1}^{N} \sum_{k=1}^{N_s} \bigg\| &\mathbf y_{n,k} - \mathbf y^{\rm dir}_{n,k}(\theta_{BT},\tau_{\rm d}, g_{\rm d}) \\
&- \mathbf y^{\rm ris}_{n,k}(\theta_{RT},\tau_{\rm r},g_{\rm r}) \bigg\|_2^2.
\end{aligned}
\label{eq:ML_NLS}
\end{equation}
Finally, $\hat{\mathbf{p}}_T$ is obtained from $\hat{\bm\eta}$ as we show latter in Sec. \ref{sec4}.

\section{Proposed Coarse Channel Estimation Method}
\subsection{Direct Path Dictionary}
For mono-static direct sensing, the AOD and AOA at the BS coincide. We thus use a single dictionary, denoted by $\CMcal{G}_\theta = \{\theta_1,\ldots,\theta_G\}, \theta_m \in [-\pi/2,\pi/2)$, to represent both the transmit and receive spatial signatures. For $n$-th subcarrier at angle $\theta_m$, we define $s^{\mathrm{dir}}_{n,k}(\theta_m) = \sqrt{N_t}\mathbf{a}^H_t(\theta_m) \mathbf{f}_{n,k}$, then, the $k$-th block of the dictionary is
\begin{equation}
\mathbf{d}^{\mathrm{dir}}_{n,k}(\theta_m) = \sqrt{N_r}\mathbf{a}_r(\theta_m+\pi) s^{\mathrm{dir}}_{n,k}(\theta_m).
\end{equation}
Stacking over $k$, we obtain the direct-path dictionary column
\begin{equation}
\bm{\omega}^{\mathrm{dir}}_{n}(\theta_m) = \big[ \mathbf{d}^{\mathrm{dir}}_{n,1}(\theta_m)^\top,\ldots,\mathbf{d}^{\mathrm{dir}}_{n,N_s}(\theta_m)^\top \big]^\top \in \C^{N_rN_s}.
\end{equation}
Collecting all atoms yields
\begin{equation}
\mathbf{\Omega}^{\mathrm{dir}}_n = [\bm{\omega}^{\mathrm{dir}}_{n}(\theta_1),\ldots,\bm{\omega}^{\mathrm{dir}}_{n}(\theta_G)] \in \C^{N_rN_s\times G}.
\end{equation}
\subsection{RIS-assisted path dictionary}
Given known $\theta_{BR}$ and $\theta_{RB}$, define $s^{\mathrm{ris}}_{n,k} = \sqrt{N_t}\mathbf{a}_t^H(\theta_{BR}) \mathbf{f}_{n,k}$ and $b_{n,k}(\theta_m) = \Big(\mathbf{a}_{\mathrm{RIS}}^H(\theta_{RB}) \mathbf{\Phi}_{n,k}\mathbf{a}_{\mathrm{RIS}}(\theta_m)\Big)^2$, the $k$-th block of the RIS dictionary is
\begin{equation}
\mathbf{d}^{\mathrm{ris}}_{n,k}(\theta_m) = \sqrt{N_r}\mathbf{a}_r(\theta_{RB}) s^{\mathrm{ris}}_{n,k}b_{n,k}(\theta_m),
\end{equation}
and the stacked column is
\begin{equation}
\bm{\omega}^{\mathrm{ris}}_{n}(\theta_m) = \big[ \mathbf{d}^{\mathrm{ris}}_{n,1}(\theta_m)^\top,\ldots, \mathbf{d}^{\mathrm{ris}}_{n,N_s}(\theta_m)^\top \big]^\top.
\end{equation}
Collecting all atoms yields
\begin{equation}
\mathbf{\Omega}^{\mathrm{ris}}_n = [\bm{\omega}^{\mathrm{ris}}_{n}(\theta_1),\ldots, \bm{\omega}^{\mathrm{ris}}_{n}(\theta_G)] \in \C^{N_rN_s\times G}.
\end{equation}
\subsection{Channel Estimation}
\subsubsection{Path gain estimation}
Let $\CMcal{P}\in\{\mathrm{d},\mathrm{r}\}$ denote the path type, with atoms $\bm{\omega}^{\CMcal{P}}_n(\theta_m)$, $m=1,\ldots,G$. We define the normalized correlation for the residual $\{\mathbf{r}_n\}_{n=1}^N$ as
\begin{equation}
\CMcal{J}^{\CMcal{P}}(m;\{\mathbf{r}_n\})
= \sum_{n=1}^N
\frac{\big|(\bm{\omega}^{\CMcal{P}}_n(\theta_m))^H \mathbf{r}_n\big|^2}
{\|\bm{\omega}^{\CMcal{P}}_n(\theta_m)\|^2}.
\end{equation}
The selected grid index and the corresponding channel coefficient are
\begin{equation}
\hat m_{\CMcal{P}} = \argmax_{m\in\{1,\ldots,G\}} \CMcal{J}^{\CMcal{P}}(m;\{\mathbf{r}_n\}),\label{angle}
\end{equation}
and
\begin{equation}
\hat h_{\CMcal{P},n}=
\frac{(\bm{\omega}^{\CMcal{P}}_n(\theta_{\hat m_{\CMcal{P}}}))^H \mathbf{r}_n}{\|\bm{\omega}^{\CMcal{P}}_n(\theta_{\hat m_{\CMcal{P}}})\|^2}, \quad \forall n, \label{gain}
\end{equation}
respectively. The reconstructed component is
\begin{equation}
\hat{\mathbf{r}}^{\CMcal{P}}_n = \bm{\omega}^{\CMcal{P}}_n(\theta_{\hat{m}_{\CMcal{P}}})\hat{h}_{\CMcal{P},n},
\end{equation}
and the residual update is $\mathbf{r}_n \leftarrow \mathbf{r}_n - \hat{\mathbf{r}}^{\CMcal{P}}_n$.
\subsubsection{Distance estimation}
The estimated channel coefficient follow $\hat{h}_{\mathrm{dir},n}  \propto g_{\mathrm{d}} e^{-j\omega_n \tau_{\mathrm{d}}}$ and $\hat{h}_{\mathrm{ris},n}  \propto g_{\mathrm{r}} e^{-j\omega_n \tau_{\mathrm{r}}}$, respectively. Herein, $\hat{h}_{\mathrm{ris},n}$ encodes the two-way distance $d_{\mathrm{r}} = c\tau_{\mathrm{r}} = 2(d_{BR}+d_{RT})$. Let $\phi_n = \angle \hat{h}_n$ for a general path, since $\phi_n \approx -\omega_n \tau + \phi_0$ and $\omega_{n+1}-\omega_{n} = 2\pi/(NT_s)$, the phase difference satisfies
\begin{equation}
\Delta \phi_n \triangleq \phi_{n+1}-\phi_n \approx -\frac{2\pi}{NT_s}\tau.
\end{equation}
Averaging over $n$ gives the delay estimator
\begin{equation}
\hat{\tau} = -\frac{NT_s}{2\pi}\frac{1}{N-1}\sum_{n=1}^{N-1} \Delta \phi_n. \label{eq19}
\end{equation}
Hence any two-way distance estimate is $\hat{d}_{\mathrm{tw}} = c\hat{\tau}$. For the direct echo, we recover the distance as
\begin{equation}
\hat{d}_{BT} = \frac{\hat{d}_{\mathrm{d,tw}}}{2}.
\label{toa1}
\end{equation}
For the RIS echo, we recover the distance as
\begin{equation}
\hat{d}_{RT} = \frac{\hat{d}_{\mathrm{r,tw}}}{2} - d_{BR},\label{toa2}
\end{equation}
\section{Proposed method for joint parameter and target position estimation}\label{sec4}
The coarse method give AOA estimates of two path, together with their TOA. To jointly mitigate grid mismatch and estimate target's position, we further refine the parameters by optimizing a least square cost function. Define the stacked vector $\mathbf{y} \in \mathbb{C}^{N_rN_sN}$ by concatenating $\mathbf{y}_{n,k}$ over $k$ then $n$ as
\begin{equation}
\mathbf{y} = [\mathbf{y}_{0,1},..., \mathbf{y}_{0,N_s},
\mathbf{y}_{1,1},..., \mathbf{y}_{1,N_s},
\mathbf{y}_{N-1,1},..., \mathbf{y}_{N-1,N_s}]^\top.
\end{equation}
It's easy to see that $(\theta_{BT},\theta_{RT},\tau_d,\tau_r)$  are not independent but are determined by the target position $\mathbf{p} = [x, y]^\top$ via geometry. We write $\theta_{BT} = \theta_{BT}(\mathbf{p})$, $\theta_{RT} = \theta_{RT}(\mathbf{p})$, $\tau_{d} = \tau_{d}(\mathbf{p})$, and $\tau_{r} = \tau_{r}(\mathbf{p})$ and re-write the sensing atom as the block corresponding to entries $(n,k)$ as
\begin{align}
\bm{\phi}_{\mathrm{d}}^{(n,k)}(\mathbf{p}) &=
\mathbf{u}_{\mathrm{rx}}\big({\theta_{BT}}(\mathbf{p})\big)e^{-j\omega_n{\tau_\mathrm{d}}(\mathbf{p})}
\big(\mathbf{u}^H_{\mathrm{tx}}({\theta_{BT}}(\mathbf{p})) \mathbf{f}_{n,k}\big),
\label{eq:phi_dir_block} \\
\bm{\phi}_{\mathrm{r}}^{(n,k)}(\mathbf{p})
&= \mathbf{u}_{\mathrm{rx}}(\theta_{\mathrm{RB}})
e^{-j\omega_n\tau_{\mathrm{r}}}(\mathbf{p})
\big(\mathbf{u}^H_{\mathrm{tx}}(\theta_{\mathrm{BR}}) \mathbf{f}_{n,k}\big)b_{n,k}\big({\theta_{RT}}(\mathbf{p})\big).
\label{eq:phi_ris_block}
\end{align}
Then, for a given position $\mathbf{p}$, define the two sensing atoms $\bm{\phi}_{\mathrm{d}}(\mathbf{p})\in\mathbb{C}^{N_rN_sN}$ and $\bm{\phi}_{\mathrm{r}}(\mathbf{p})\in\mathbb{C}^{N_rN_sN}$ such that
\begin{equation}
\mathbf{y} \approx \bm{\phi}_{\mathrm{d}}(\mathbf{p})g_{\mathrm{d}}+\bm{\phi}_{\mathrm{r}}(\mathbf{p})g_{\mathrm{r}}.
\label{eq:two_atom_model}
\end{equation}
The refinement step have to solves the NLS problem \cite{b8}
\begin{equation}
\min_{\mathbf{p}, g_{\mathrm{d}} ,g_{\mathrm{r}}}
\CMcal{J}(\mathbf{p},g_{\mathrm{d}},g_{\mathrm{r}})
\triangleq
\left\|
\mathbf{y}
-\bm{\phi}_{\mathrm{d}}(\mathbf{p})g_{\mathrm{d}}
-\bm{\phi}_{\mathrm{r}}(\mathbf{p})g_{\mathrm{r}}
\right\|_2^2.
\label{eq:nls_obj}
\end{equation}
The conventional refinement adopts a coordinate-descent (CD) update for the path gains ($g_{\mathrm{d}}, g_{\mathrm{r}}$) and a gradient-descent (GD) update for the target position ($\mathbf{p}$). This type of search can be found in \cite{b3, b7, b10}. It is simple, numerically stable, able to reach to lower bound but exhibits high complexity. To lower complexity without sacrificing the original measurement model, we propose a novel refinement method as follows.

\subsection{Finding $g_{\mathrm{d}}$, $g_{\mathrm{r}}$, given $\mathbf{p}$}
Define $\bm{\Phi}(\mathbf{p}) \triangleq \left[\bm{\phi}_d(\mathbf{p}), \bm{\phi}_r(\mathbf{p})\right] \in \mathbb{C}^{N_r N_s N \times 2}$
and $\mathbf{g}\triangleq[g_d,  g_r]^\top\in\mathbb{C}^{2\times 1}$, the closed-form solution is
\begin{equation}
\hat{\mathbf{g}} = \left(\mathbf{\Phi}(\mathbf{p})^H\mathbf{\Phi}(\mathbf{p})\right)^{-1}\mathbf{\Phi}(\mathbf{p})^H\mathbf{y}.
\label{eq:gain_ls}
\end{equation}
Compared with CD (alternating updates for $g_d$ and $g_r$), the joint LS update in \eqref{eq:gain_ls} yields faster reduction of the residual. Given $\hat{\mathbf{g}}$, the residual is
\begin{equation}
\mathbf{e} = \mathbf{y} - \mathbf{\Phi}(\mathbf{p})\hat{\mathbf{g}}.
\label{eq:residual_cost_position}
\end{equation}

\subsection{Finding $\mathbf{p}$, given $g_{\mathrm{d}}, g_{\mathrm{r}}$}
In CD manner, update of $\mathbf{p}$ would require rebuilding $\bm{\phi}_d(\theta_{BT},\tau_d)$ and $\bm{\phi}_r(\theta_{RT},\tau_r)$ at each inner iteration, which is computationally expensive due to the nested loops over subcarriers. Instead, we introduce an outer-loop base point and linearize the atoms around it. Let $\mathbf{p}_0$ be the current position estimate at the beginning of an outer iteration. Define $\theta_{BT}=\theta_{BT}(\mathbf{p}_0)$, $\theta_{RT}=\theta_{RT}(\mathbf{p}_0)$, $\tau_{d0}=\tau_d(\mathbf{p}_0)$, $\tau_{r0}=\tau_r(\mathbf{p}_0)$, we build the base atoms $\bm{\phi}_{d0} \triangleq \bm{\phi}_d(\theta_{BT},\tau_{d0})$, $\bm{\phi}_{r0} \triangleq \bm{\phi}_r(\theta_{RT},\tau_{r0})$ and their partial derivatives once as follows
\begin{subequations}
\begin{align}
\frac{\partial \bm{\phi}_d}{\partial \theta_{BT}}\Big|_0 &\triangleq \mathbf{d}_{d,\theta_{BT}}, &
\frac{\partial \bm{\phi}_d}{\partial \tau_d}\Big|_0 &\triangleq \mathbf{d}_{d,\tau},
\\
\frac{\partial \bm{\phi}_r}{\partial\theta_{RT}}\Big|_0 &\triangleq \mathbf{d}_{r,\theta_{RT}}, &
\frac{\partial \bm{\phi}_r}{\partial \tau_r}\Big|_0 &\triangleq \mathbf{d}_{r,\tau}.
\end{align}
\label{eq:base_atoms_and_partials}
\end{subequations}
These quantities correspond directly to the outputs of the functions that compute $\bm{\phi}$ and its Jacobians w.r.t. angle and delay. For any inner-loop position $\mathbf{p}$ near $\mathbf{p}_0$, define angle and delay increments $\Delta\theta_{BT} \triangleq \theta_{BT}(\mathbf{p})-\theta_{BT}$, $\Delta\theta_{RT}  \triangleq \theta_{RT}(\mathbf{p})-\theta_{RT}$, $\Delta\tau_d \triangleq \tau_d(\mathbf{p})-\tau_{d0}$, and $\Delta\tau_r \triangleq \tau_r(\mathbf{p})-\tau_{r0}$. Then the atoms are approximated by first-order Taylor expansions
\begin{equation}
\bm{\phi}_d(\theta_{BT}(\mathbf{p}),\tau_d(\mathbf{p})) \approx \bm{\phi}_{d0} + \mathbf{d}_{d,\theta_{BT}}\Delta\theta_{BT} +
\mathbf{d}_{d,\tau}\Delta\tau_d,
\label{eq:phi_dir_linear}
\end{equation}
\begin{equation}
\bm{\phi}_r(\theta_{RT}(\mathbf{p}),\tau_r(\mathbf{p}))
\approx \bm{\phi}_{r0} + \mathbf{d}_{r,\theta_{RT}}\Delta\theta_{RT} +
\mathbf{d}_{r,\tau}\Delta\tau_r.
\label{eq:phi_ris_linear}
\end{equation}
This linearization enables fast inner iterations because no subcarrier loops are required to update $\bm{\phi}_d$ and $\bm{\phi}_r$ within the inner loop. To update $\mathbf{p}$, we use an Levenberg update step that approximates the NLS objective function. The key is constructing the Jacobian of the predicted observation $\hat{\mathbf{y}}$ w.r.t. position coordinates $x$ and $y$. Let the partial derivatives be $\frac{\partial \theta_{BT}}{\partial x}$, $\frac{\partial \theta_{BT}}{\partial y}$, $\frac{\partial \tau_d}{\partial x}$, $\frac{\partial \tau_d}{\partial y}$, $\frac{\partial \theta_{RT}}{\partial x}$, $\frac{\partial \theta_{RT}}{\partial y}$, $\frac{\partial \tau_r}{\partial x}$, $\frac{\partial \tau_r}{\partial y}$, which are computed in closed-form from the 2D geometry. Within an outer iteration, we reuse the base partial derivatives from \eqref{eq:base_atoms_and_partials} and apply the chain rule
\begin{equation}
\frac{\partial \bm{\phi}_d}{\partial x}\approx \mathbf{d}_{d,\theta_{BT}}\frac{\partial \theta_{BT}}{\partial x} + \mathbf{d}_{d,\tau}\frac{\partial \tau_d}{\partial x}, \label{eq32}
\end{equation}
\begin{equation}
\frac{\partial \bm{\phi}_r}{\partial x} \approx \mathbf{d}_{r,\theta_{RT}}\frac{\partial\theta_{RT}}{\partial x} + \mathbf{d}_{r, \tau}\frac{\partial \tau_r}{\partial x}. \label{eq33}
\end{equation}
By replacing $x$ in \eqref{eq32} and \eqref{eq33} with $y$, we obtain the derivatives of the atoms with respect to $y$. Using $\hat{\mathbf{y}}=\bm{\phi}_d g_d + \bm{\phi}_r g_r$ and treating $(g_d,g_r)$ as fixed at the current inner step, we form two Jacobian columns
\begin{equation}
\mathbf{J}_x \triangleq \frac{\partial \hat{\mathbf{y}}}{\partial x} \approx \left(\frac{\partial \bm{\phi}_d}{\partial x}\right)g_d + \left(\frac{\partial \bm{\phi}_r}{\partial x}\right)g_r,
\label{eq:jacobian_column1}
\end{equation}
\begin{equation}
\mathbf{J}_y \triangleq \frac{\partial \hat{\mathbf{y}}}{\partial y}
\approx \left(\frac{\partial \bm{\phi}_d}{\partial y}\right)g_d + \left(\frac{\partial \bm{\phi}_r}{\partial y}\right)g_r.
\label{eq:jacobian_column2}
\end{equation}
Stacking them gives $\mathbf{J}\triangleq[\mathbf{J}_x\;\mathbf{J}_y]\in\mathbb{C}^{N_r N_s N\times 2}$. Because $\mathbf{p}$ is real-valued and the residual is complex, we use the standard real-part form as $\mathbf{H} \triangleq \Re\{\mathbf{J}^H\mathbf{J}\}\in\mathbb{R}^{2\times 2}$ and $\mathbf{b} \triangleq \Re\{\mathbf{J}^H\mathbf{e}\}\in\mathbb{R}^{2\times 1}$ and compute the update as
\begin{equation}
\mathbf{p}^{(i+1)} = \mathbf{p}^{(i)} + \left(\mathbf{H}+\mu\mathbf{I}_2\right)^{-1}\mathbf{b},
\label{eq:lm_step}
\end{equation}
where $\mu>0$ is the damping parameter and $\mathbf{I}_2$ is the $2\times 2$ identity matrix. 

\begin{algorithm}[t]
\caption{RIS-assisted ISAC Positioning Algorithm}
\label{alg:fast_lm}
\SetKwInOut{Input}{Input}
\SetKwInOut{Output}{Output}
\SetKwInOut{Initialize}{Initialize}

\Input{$\mathbf{y}$, $\mathbf{p}_B$, $\mathbf{p}_R$, $\theta_{BR}$,$ d_{BR}$, outer iterations $K_{out}$, inner iterations $K_{in}$}
\Output{Estimated target position $\hat{\mathbf p}$}
\BlankLine
\textit{// Coarse Estimation:}\\
Estimate angles $(\theta_{BT,0}, \theta_{RT,0})$ using $\eqref{angle}$\;
Estimate path gains $(g_{d,0}, g_{r,0})$ using $\eqref{gain}$\;
Estimate delays $(\tau_{d,0}, \tau_{r,0})$ using \eqref{eq19}\; 
\textit{// Refinement:}\\
\For{$k = 1, 2, \dots, K_{out}$}{
    Build sensing atoms and their Jacobians using \eqref{eq:base_atoms_and_partials}\;

    Update sensing atoms using \eqref{eq:phi_dir_linear} and \eqref{eq:phi_ris_linear}\;
    
    \For{$i = 1, 2, \dots, K_{in}$}{
        Update path gains using \eqref{eq:gain_ls}\;
        
        Compute residual using \eqref{eq:residual_cost_position}\;
        
        Construct linearized Jacobian using \eqref{eq:jacobian_column1}-\eqref{eq:jacobian_column2}\;
        
        Update $\mathbf{p}$ using \eqref{eq:lm_step}\;
    }
}
\Return $\hat{\mathbf{p}}$\;
\end{algorithm}
\section{Simulation Results and Discussion}
\subsection{Simulation Setup}
We consider a 2D monostatic RIS-aided ISAC sensing scenario as shown in Fig. \ref{fig:1}. $\theta_{BR}$ and $d_{BR}$ are assumed known. The node coordinates are fixed as $\mathbf{p}_B = [0, 0]^\top, \mathbf{p}_R = [5, 5]^\top$ and $\mathbf{p}_T$ is chosen randomly in the first quadrant. The true angle and distance then can be inferred by the geometrical relationship between nodes. The BS employs co-located transmit/receive ULAs with $N_t = N_r = 32$ antenna elements, and the RIS is a ULA with $M = 32$ elements. The inter-element spacing is $d = \lambda/2$. The propagation speed is set to $c = 300 ~\mathrm{m}/\mu\mathrm{s}$. The BS transmits sensing waveforms over $N = 20$ subcarriers with sampling rate $R_s = 100 ~ \mathrm{MHz}$ and sampling period $T_s = 1/R_s$. For each subcarrier $n$, the BS probes the scene over $N_s = 32$ transmitted time. For each realization, we generate random unit-modulus precoders $\bm{f}_{n,k}\in\mathbb{C}^{N_t}$ with $[\bm{f}_{n,k}]_i=e^{j2\pi u}$ where $u \sim \CMcal{U}(0,1)$ independently for all $i,n,k$. We also create random RIS phase matrices $\Phi_{n,k} = \mathrm{diag}(e^{j\varphi_{n,k,1}},\ldots,e^{j\varphi_{n,k,M}})$ with $\varphi_{n,k,m}\sim\CMcal{U}(0,2\pi)$ i.i.d. for all $m,n,k$. All results are averaged over $1000$ independent realizations. We use \textit{i)} position RMSE and \textit{ii)} algorithmic complexity as two performance metrics.
\subsection{Simulation results and discussion}
\subsubsection{On the coarse estimation}
\begin{figure}[t]
    \centering
    \begin{subfigure}{0.9\columnwidth} 
        \centering
        \includegraphics[width=\linewidth]{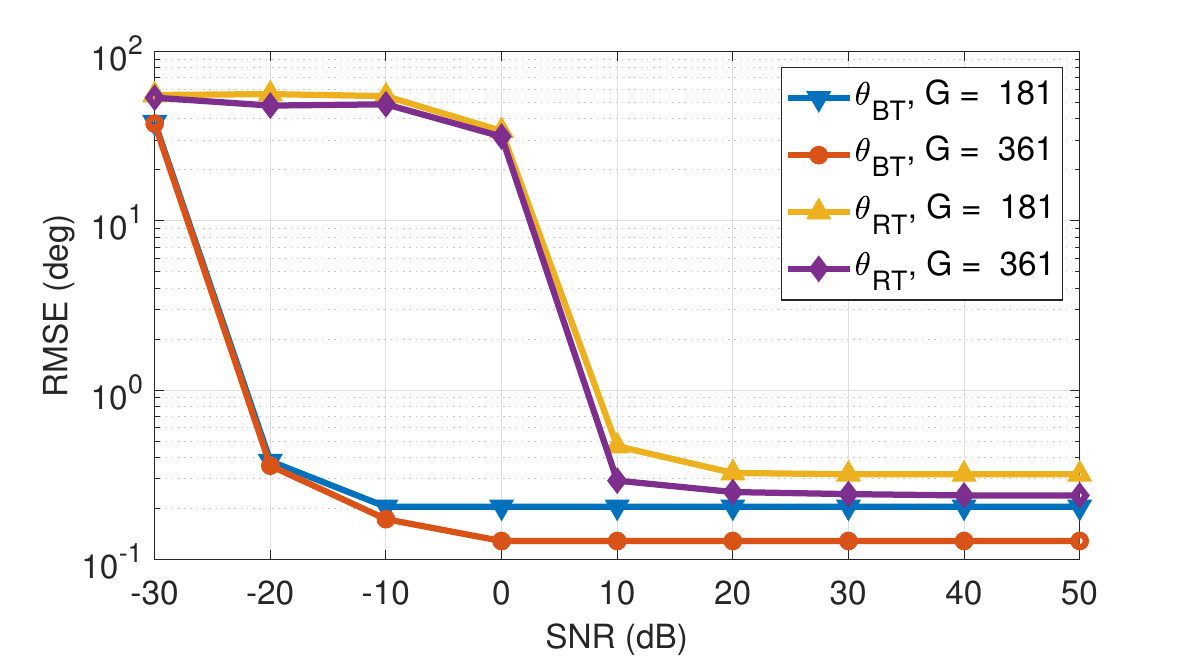}
        \caption{Angle's RMSE vs SNR.}
        \label{fig:2a}
    \end{subfigure}

    \begin{subfigure}{0.9\columnwidth}
        \centering
        \includegraphics[width=\linewidth]{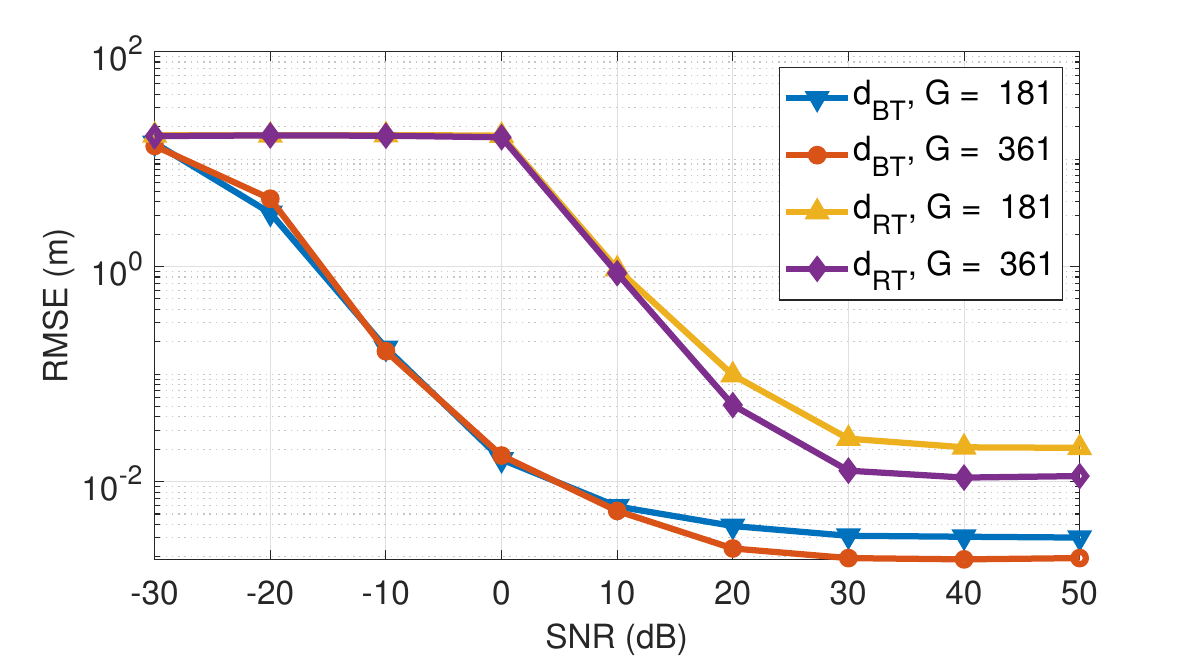}
        \caption{Range's  RMSE vs SNR.}
        \label{fig:2b}
    \end{subfigure}
    
    \caption{Coarse estimate evaluation}
    \label{fig:2}
\end{figure}
Fig. \ref{fig:2} summarizes the coarse estimation performance of the proposed two–stage scheme as a function of SNR and dictionary size $G$. As anticipated, the direct path $(\theta_{BT}, d_{BT})$ is always estimated more accurately than the RIS path $(\theta_{RT}, d_{RT})$. This gap is due to the RIS estimator operates under an effectively lower SNR and in the presence of colored noise from the imperfect cancellation of the direct path. The impact of the angular grid size is also clearly visible. For both paths, increasing $G$ from $181$ to $361$ yields a reduction of the high-SNR RMSE, in particular for the angles. This behavior is consistent with the fact that the quantization floor scales approximately as $\Delta\theta$, where $\Delta\theta = \pi/(G-1)$ is the angular grid spacing. A finer grid reduces the deterministic approximation error of the on-grid model and simultaneously mitigates the residual interference leaked from the first stage. Consequently, the effective noise seen by the RIS path is also reduced when $G$ is larger. 

Another important scene is that, for the same $G$, the high-SNR RMSE of the RIS path does not converge to that of the direct path. Instead, the RIS curves saturate at a higher level and at a larger SNR. This delayed and elevated saturation is compatible with the sequential error-propagation, even when the thermal noise becomes negligible, the RIS estimator still suffers from the structured residual created by the coarse direct-path estimation. In range (Fig. \ref{fig:2b}), this effect is amplified because the RIS delay is inferred from the phase slope of the composite cascaded coefficient, so any small angular mismatch in the first stage translates into a non-vanishing perturbation of the phase across subcarriers.

\subsubsection{On the refinement estimation}
Fig. \ref{fig:3} shows the effectiveness of the proposed refinement method in terms of convergence speed and localization accuracy. As shown in Fig. \ref{fig:3a}, both methods ultimately converge to the correct magnitudes. The proposed method shows mild non-monotonic fluctuations during early iterations. This behavior is consistent with a more aggressive update mechanism of Levenberg algorithm and the use of local linearization, which may yield larger corrective steps but is subsequently stabilized as the iterates enter the local basin of attraction. In contrast, CD+GD tends to exhibit a smoother toward the exact value. This behavior is also observed in the RIS-assisted path.
\begin{figure}[t]
    \centering
    \begin{subfigure}{0.93\columnwidth} 
        \centering
        \includegraphics[width=\linewidth]{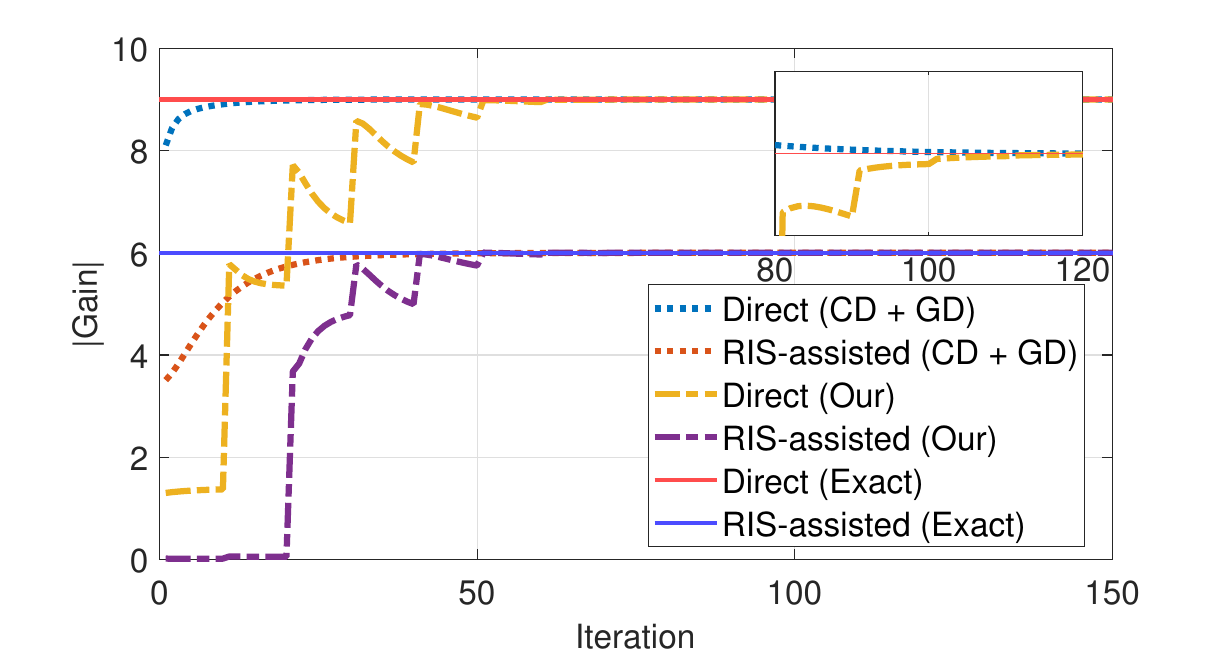}
        \caption{Convergence evaluation in the noiseless case.}
        \label{fig:3a}
    \end{subfigure}
    \begin{subfigure}{0.93\columnwidth}
        \centering
        \includegraphics[width=\linewidth]{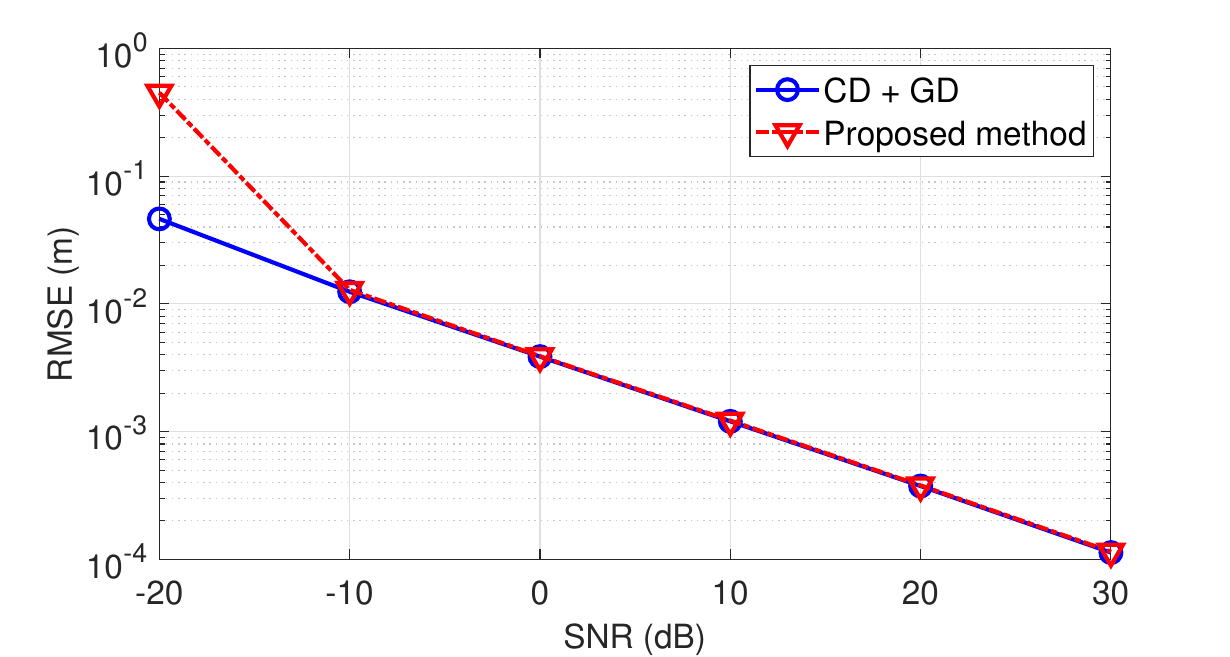}
        \caption{Position's RMSE vs SNR.}
        \label{fig:3b}
    \end{subfigure}
    
    \caption{Refinement evaluation}
    \label{fig:3}
\end{figure}
Fig. \ref{fig:3b} shows RMSE versus SNR on a log-scale for both CD+GD and the proposed refinement method. It is clear that for moderate-to-high SNR (approximately -10 dB and above), the two curves are almost indistinguishable and decay with essentially the same slope, indicating that the proposed method achieves the same noise-limited localization accuracy as the conventional method. In particular, the performance of the proposed method at high SNR suggest that the linearized Levenberg-based refinement does not introduce a persistent model-mismatch error; instead, it converges to an estimator with the same effective asymptotic behavior as CD+GD. 

At very low SNR (notably at -20 dB), the proposed method exhibits a markedly larger RMSE than CD+GD, which is consistent with the expected sensitivity of local linearization when the iterates are more strongly perturbed by noise. As the SNR increases, this gap rapidly vanishes and both methods enter the same regime where estimation performance is dominated by measurement noise rather than algorithmic effects, leading to practically identical RMSE performance. Also, it is anticipated that the performance of both algorithm go linearly (by log-scale) as SNR increase.

\subsubsection{On the complexity}
Let $L = N_rN_sN$ denote the length of the stacked measurement vector. In the CD+GD refinement, each iteration requires \textit{i)} rebuilding the two sensing atoms and their Jacobians, whose dominant cost scales as $C_{\mathrm{build}} = \CMcal{O}\big(NN_s(2N_t + M + 2N_r)\big)$ due to the per-$(n,k)$ beamforming inner-products and RIS combining, and \textit{ii)} lightweight updates (LS for gains) whose cost is $\CMcal{O}(L)$. Hence, over $K$ iterations, the overall complexity is $\CMcal{O}\big(KC_{\mathrm{build}}\big)$ with  $C_{\mathrm{build}}\gg L$. In contrast, the proposed method amortizes the expensive rebuild by performing $K_{in}$ inner updates per outer rebuild: each outer iteration incurs one rebuild of cost $C_{\mathrm{build}}$, while each inner step relies only on linearized atom updates, a $2\times 2$ joint LS, and a $2\times 2$ Levenberg solve, which together scale as $\CMcal{O}(L)$. Therefore, for $S = K_{out}K_{in}$ inner steps, the total complexity is $\CMcal{O}\big(K_{out}C_{\mathrm{build}} + S L\big)$. When the atom/Jacobian construction is the computational bottleneck, this yields an effective speedup on the order of $K_{in}$.
\section{Conclusion}
This paper introduces a framework for high-precision target sensing in RIS-assisted ISAC systems. We proposed a robust two-stage estimation strategy comprising a coarse initialization via sequential matched-filtering and a fine-tuning stage using a fast iterative refinement algorithm. By exploiting the separable least-squares structure and a modified Levenberg algorithm, we effectively decoupled the linear path gains from the non-linear geometric dependencies. Simulation results demonstrate that the proposed method achieves accuracy comparable to conventional non-linear estimators but with significantly reduced computational complexity. Future work will extend this framework to broader scenarios, specifically addressing the challenges of multi-target interference and tracking moving targets in dynamic environments.
\section*{Appendix I: Proof of Theorem \ref{th1}}
Defining $\hat{\bm{\eta}}$ as the unbiased estimator of $\bm{\eta}$, its MSE is bounded as
\begin{equation}
\mathbb{E}_{\bm{y}|\bm{\eta}} [(\hat{\bm{\eta}} - \bm{\eta})(\hat{\bm{\eta}} - \bm{\eta})^\top] \succeq \bm{J}^{-1}_{\bm{\eta}},
\end{equation}
where $\mathbb{E}_{\bm{y}|\bm{\eta}}[.]$ denotes the expectation parameterized by the unknown parameters $\bm{\eta}$, and $\bm{J}_{\bm{\eta}}$ is the FIM, defined as
\begin{equation}
\mathbf{J}_{\bm{\eta}} \triangleq \mathbb{E}_{\bm{y}|\bm{\eta}} \left[-\frac{\partial^2 \ln f(\bm{y}|\bm{\eta})}{\partial\bm{\eta} \partial\bm{\eta}^\top}\right],
\end{equation}
where $f(\bm{y}|\bm{\eta})$ is the likelihood function of $\bm{y}$ conditioned on $\bm{\eta}$. We determine the position's Fisher Information Matrix (FIM) through a transformation of variables from $\bm{\eta}$ to $\tilde{\bm{\eta}} = [x, y]^\top$, where $x$ and $y$ are the target's position coordinates in 2D. The FIM of $\tilde{\bm{\eta}}$ is obtained by means of the transformation matrix $\mathbf{T}$ as
\begin{equation}
\mathbf{J}_{\tilde{\bm{\eta}}} =  \mathbf{T}^\top\mathbf{J}_{\bm{\eta}}\mathbf{T},
\end{equation}
where $\mathbf{T} \triangleq \partial\bm{\eta}^\top/\partial\tilde{\bm{\eta}}$. Since the paths are assumed to be independent, the FIM for the channel parameters, $\mathbf{J}_{\bm{\eta}}$, can be expressed as the sum of the FIMs corresponding to each path
\begin{equation}
\mathbf{J}_{\bm{\eta}} = \mathbf{J}_{\bm{\eta},1} + \mathbf{J}_{\bm{\eta},2} + \dots + \mathbf{J}_{\bm{\eta},R},
\end{equation}
where $\mathbf{J}_{\bm{\eta},r} \succeq 0$ represents the FIM contribution from all path. This decomposition holds because the independence of the paths implies that the log-likelihood function is additive across paths, leading to additive Fisher information. Consequently, the summand, $\mathbf{J}_{\bm{\eta}}$, is also positive semi-definite. Applying the position transformation, the FIM for $\tilde{\bm{\eta}} = [x, y]^\top$ becomes:
\begin{equation}
\begin{split}
\mathbf{J}_{\tilde{\bm{\eta}}} = \mathbf{T}^\top \mathbf{J}_{\bm{\eta}} \mathbf{T} 
&= \mathbf{T}^\top (\mathbf{J}_{\bm{\eta},1} + \mathbf{J}_{\bm{\eta},2} + \dots + \mathbf{J}_{\bm{\eta},R}) \mathbf{T} \\
&= \mathbf{J}_{\tilde{\bm{\eta}},1} + \mathbf{J}_{\tilde{\bm{\eta}},2} + \dots + \mathbf{J}_{\tilde{\bm{\eta}},R} \succeq 0,
\end{split}
\end{equation}
where $\mathbf{J}_{\tilde{\bm{\eta}},r} = \mathbf{T}^\top \mathbf{J}_{\bm{\eta},r} \mathbf{T}$ is the FIM contribution in the position space for the $r$-th path. Since each $\mathbf{J}_{\bm{\eta},r}$ is positive semi-definite, and $\mathbf{T}$ is the Jacobian matrix of the transformation, each $\mathbf{J}_{\tilde{\bm{\eta}},r}$ is also positive semi-definite. Thus, $\mathbf{J}_{\tilde{\bm{\eta}}}$ is positive semi-definite as the sum of positive semi-definite matrices. Hence, as the CRB is defined as the inverse of the FI, then when the number of path increases, the PEB approaches zero.

\end{document}